\definecolor{PSIgray5}{gray}{0.31}
\definecolor{PSIgray4}{gray}{0.41}
\definecolor{PSIgray3}{gray}{0.59}
\definecolor{PSIgray2}{gray}{0.73}
\definecolor{PSIgray1}{gray}{0.90}
\definecolor{PSIred5}{RGB}{197,0,6}
\definecolor{PSIred4}{RGB}{208,71,41}
\definecolor{PSIred3}{RGB}{218,114,82}
\definecolor{PSIred2}{RGB}{231,162,135}
\definecolor{PSIred1}{RGB}{243,210,194}
\definecolor{PSIgreen5}{RGB}{25,116,24}
\definecolor{PSIgreen4}{RGB}{81,138,66}
\definecolor{PSIgreen3}{RGB}{125,165,105}
\definecolor{PSIgreen2}{RGB}{168,195,154}
\definecolor{PSIgreen1}{RGB}{212,226,206}
\definecolor{PSIblue5}{RGB}{0,59,110}
\definecolor{PSIblue4}{RGB}{64,85,131}
\definecolor{PSIblue3}{RGB}{105,118,158}
\definecolor{PSIblue2}{RGB}{152,159,189}
\definecolor{PSIblue1}{RGB}{203,207,223}
\newtheorem{theorem}{Theorem}[section]
\def\adl@drawiv#1#2#3{%
        \hskip.5\tabcolsep
        \xleaders#3{#2.5\@tempdimb #1{1}#2.5\@tempdimb}%
                #2\z@ plus1fil minus1fil\relax
        \hskip.5\tabcolsep}
\newcommand{\cdashlinelr}[1]{%
  \noalign{\vskip\aboverulesep
           \global\let\@dashdrawstore\adl@draw
           \global\let\adl@draw\adl@drawiv}
  \cdashline{#1}
  \noalign{\global\let\adl@draw\@dashdrawstore
           \vskip\belowrulesep}}
\begin{document}

\title{Forecasting Particle Accelerator
Interruptions\texorpdfstring{\\}{}Using Logistic LASSO Regression}

\author{\href{https://orcid.org/0000-0003-4881-2166}{\includegraphics[scale=0.06]{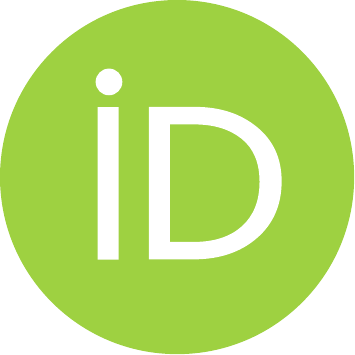}\hspace{1mm}Sichen~Li}}%
\altaffiliation[Also at the ]{Department of Physics, ETH Z\"urich}
\email{sichen.li@psi.ch}
\author{\href{https://orcid.org/0000-0003-4881-2166}{\includegraphics[scale=0.06]{figs/orcid.pdf}\hspace{1mm}
Jochem~Snuverink}}
\affiliation{%
Paul Scherrer Institut, 5232 Villigen, Switzerland}%
\author{\href{https://orcid.org/0000-0003-4881-2166}{\includegraphics[scale=0.06]{figs/orcid.pdf}\hspace{1mm}
Fernando~Perez-Cruz}}
\affiliation{%
Swiss Data Science Center, ETH Z\"urich, Switzerland}%

\author{\href{https://orcid.org/0000-0002-7230-7007}{\includegraphics[scale=0.06]{figs/orcid.pdf}\hspace{1mm}Andreas~Adelmann}}%
\email{andreas.adelmann@psi.ch}
\affiliation{%
Paul Scherrer Institut, 5232 Villigen, Switzerland}%

\date{\today}


\hypersetup{
pdftitle={Forecasting Particle Accelerator Interruptions Using Logistic LASSO Regression},
pdfsubject={},
pdfauthor={Sichen~Li, Jochem~Snuverink, Fernando~Perez-Cruz, Andreas~Adelmann},
pdfkeywords={Machine learning, particle accelerator, beam interruptions, Lasso regression},
}

\begin{abstract}
   Unforeseen particle accelerator interruptions, also known as interlocks, lead to abrupt operational changes despite being necessary safety measures. These may result in substantial loss of beam time and perhaps even equipment damage. We propose a simple yet powerful binary classification model aiming to forecast such interruptions, in the case of the High Intensity Proton Accelerator complex at the Paul Scherrer Institut. The model is formulated as logistic regression penalized by least absolute shrinkage and selection operator, based on a statistical two sample test to distinguish between unstable and stable states of the accelerator.
   
   The primary objective for receiving alarms prior to interlocks is to allow for countermeasures and reduce beam time loss. Hence, a continuous evaluation metric is developed to measure the saved beam time in any period, given the assumption that interlocks could be circumvented by reducing the beam current. The best-performing interlock-to-stable classifier can potentially increase the beam time by around \SI{5}{min} in a day.\ Possible instrumentation for fast adjustment of the beam current is also listed and discussed.
\end{abstract}

\keywords{machine learning, particle accelerator, beam interruptions, LASSO regression}

\maketitle

\section{Introduction}
With the progressive development in data collection, storage and analysis capacities in recent years, data-driven algorithms, especially machine learning (ML) methods, have gained increasing exposure and importance across academia, industry and social life.
Particle accelerators have a significant impact on a variety of scientific fields, including the hunt for novel physics~\cite{gibney2022upgraded}, nuclear waste transmutation~\cite{bowman1998accelerator,nema2011application}, and cancer treatment~\cite{amaldi1999cancer,amaldi2004future}.
Given that they constantly generate large volume of structured data stream throughout operation, particle accelerators are naturally suitable for application of ML techniques which are known to be powerful in handling highly sophisticated input space with precise objectives~\cite{edelen2016neural}.

In recent years, there has been a significant rise in the use of data-driven methodologies in theoretical and technical research around particle accelerators~\cite{arpaia2020machine, edelen2018opportunities}, such as fast and accurate beam dynamics modelling that aims to assist with future accelerator design~\cite{zhao2020beam}, high-precision surrogate models that significantly save computing cost compared to original simulations~\cite{adelmann2019nonintrusive}, beam energy optimization that complies with safely restrictions~\cite{kirschner2019bayesian, kirschner2022tuning}, and more specific use cases including optics correction~\cite{fol2019optics} and collimator alignment automation~\cite{azzopardi2019operational}. 
Among all prospective areas of application, not limited to what is listed above, the assurance of safe and stable operation has always been a crucial concern in accelerator control. Preemptive detection or forecasting of anomalous events during accelerator operation would greatly contribute to more accurate and timely control, longer beam time and better beam quality for the users. Similar approaches have already been thoroughly researched in the field of predictive maintenance~\cite{kang2021remaining}, which aims to identify potential breakdowns in advance and apply the necessary maintenance actions in time. Because of the complexity of the data and wide variations among different accelerators, forecasting the future behavior of an accelerator facility is more difficult to formulate than predicting that of a deteriorating engine. Despite that, there have been multiple attempts across various institutions that address different types of anomalous events, but jointly focus on failure prediction from a preemptive anomaly detection perspective, and open up possibilities for subsequent mitigation measures~\cite{li2023review, edelen2016jgh, edelen2016neural}.

\citet{revsvcivc2020predicting, revsvcivc2022improvements} from the Spallation Neutron Source (SNS) at Oakridge use beam current pulses to detect beam loss trips via binary classification. By taking pulses \emph{before} the errant pulse as the positive class and pulses in no-trip operation as the negative class, prediction of the next pulse's behaviour should be realized inside a time budget of \SI{16}{\milli\second}, which is the interval between two consequent pulses. The best performing Random Forest (RF) classifier, together with Principle Component Analysis (PCA) techniques to refine the input features, reaches 96\% accuracy and 61\% recall, and all classifiers could perform the prediction inside \SI{4}{\milli\second}, much faster than the available time budget. A follow up study from~\citet{blokland2022uncertainty} using the same beam current pulse data achieves uncertainty aware prediction by establishing a Siamese~\cite{koch2015siamese} network. 

\citet{tennant2020superconducting} at the Continuous Electron Beam Accelerator Facility
(CEBAF) at Jefferson Lab have been focusing on  faults in the cryomodules for superconducting radio-frequency (SRF) cavities. Various machine learning models are applied to enable prompt detection of those SRF faults. With a sequential multi-class classification model, the authors further discover that cavity faults which occur quickly are significantly more difficult to anticipate than those that grow gradually~\cite{rahman2022real}. Also dedicated to SRF anomalies, a recent study by \citet{eichler2023anomaly} at the European X-ray Free Electron Laser (EuXFEL) proposes a novel parity space based approach that detects and distinguishes quenches from other anomalies, following the work of ~\citet{nawaz2018anomaly}. A residual signal, obtained from the deviation in RF waveforms and statistically indicated by the generalized likelihood ratio, serves as a distinct measure for various anomaly categories. The method is experimentally implemented, and a detailed analysis  is given on anomalies that are currently mistakenly classified by the existing method.

Following similar methodology, we present in this paper an effective classification-based approach based on a previous study~\cite{li2021novel}, aiming to forecast the beam interruptions, namely \emph{interlocks}, of the proton cyclotron facility --- High Intensity Proton Accelerators (HIPA) at Paul Scherrer Institut (PSI).

\subsection{Facility and Datasets}

HIPA produces a proton beam of nearly \SI{1.4}{\mega\watt} power, which makes it one of the most powerful proton cyclotron facilities in the world~\cite{reggiani2020improving}. To keep track of the accelerator operation while remaining within safety limits, various sensors and monitors are placed across different locations inside the facility, including beam loss monitors, temperature monitors etc. The interlock system is a necessary security measure that immediately shuts off the beam whenever a problem occurs and some signal exceeds their safety limits, such as the loss monitor shown in red in Figure~\ref{fig:interlock}. However, such shut-downs may lead to abrupt operational changes and a substantial loss of beam time. We propose to build a forecasting model for the interlocks as depicted in the bottom right of Figure~\ref{fig:interlock}. Once the model output indicates an incoming interlock, we expect to apply some recovery operation back onto the facility to avoid the interlock from happening, thus save beam time for the users.

\begin{figure}
    \centering
    \includegraphics{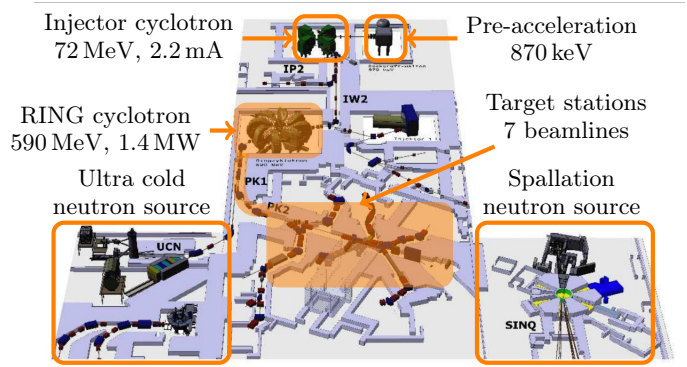}
    \caption{A schematic overview of the HIPA facilities~\cite{kovach2017energy}. The protons are pre-accelerated to \SI{870}{\keV} at a Cockroft-Walton accelerator, then fed into the Injector cyclotron to reach \SI{72}{\MeV}. The beam is then accelerated to the maximum energy of \SI{590}{\MeV} by the large 8-sector RING cyclotron before being transferred to the target stations and experimental regions.}
    \label{fig:hipa}
\end{figure}

\begin{figure}
    \centering
    \includegraphics{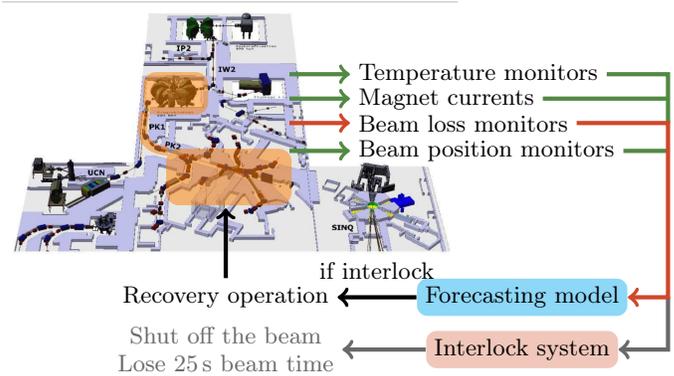}
    \caption{The HIPA interlock system with the proposed forecasting model. In this example, an interlock is triggered when the beam loss monitor sees a beam loss higher than the safety threshold (shown as red). Without the forecasting model, the beam would be immediately shut off and equivalently \SI{25}{\second} beam time would be lost (shown as gray). Whereas if an interlock is predicted by the model, some possible recovery operation could be implemented to circumvent the sudden shutdowns.}
    \label{fig:interlock}
\end{figure}

The data used across this paper are collected from a 53-days period in October and November of 2019 excluding beam development days, during which there were 1192 interlocks in total. The dataset is composed of 376 Process Variables -- so-called \emph{channels} -- and the interlock signals from the Experimental Physics and Industrial Control System (EPICS), which are recorded as multivariate time series and interpolated with \SI{5}{\hertz} frequency. Details of data collection and preprocessing are presented in section 2.1 of~\cite{li2021novel}. We formulate the interlock forecasting problem as a binary classification of two classes of samples, as shown in Fig.~\ref{fig:problem}. The positive class consists of \emph{interlock samples} that are taken close to the interlocks, which represent unstable states. The negative class consists of \emph{stable samples} that are taken far from interlocks and represent stable operating states. 

\subsection{Problem Formulation}
The behaviour of some example channels right before an interlock is shown in Figure~\ref{fig:problem}. The \emph{interlock samples}, shown as orange blocks, are taken at $T_h$ seconds before the interlock, where $T_h$ is thus the forecasting horizon. The \emph{stable samples}, shown as green blocks, are taken during stable operation away from interlocks. Each of these samples is in principle a snippet of the $d=376$ multivariate time series of length $\Delta t$.
\begin{figure}[htb]
\centering
\includegraphics{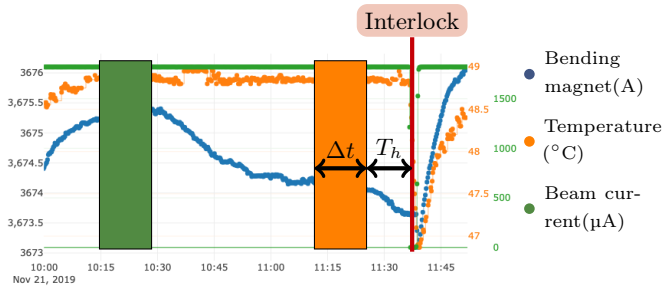}
\caption{Illustration of example channels, interlock and sample taking. 
The orange and green blocks show an \emph{interlock sample} and a \emph{stable sample} respectively. $T_h$ denotes the forecasting horizon, i.e. time of advanced alarm; $\Delta t$ is the duration of each sample.}
\label{fig:problem}
\end{figure}

\subsection{The previous RPCNN model}
In a previous study~\cite{li2021novel}, we have presented the \emph{Recurrence Plot - Convolutional Neural Network} (RPCNN) model that takes window input, transforms time series into images and then performs the classification. However, the problems of a high False Positive rate and unstable performance need to be addressed.

\begin{figure*}[tb]
\centering
\includegraphics[width=\linewidth]{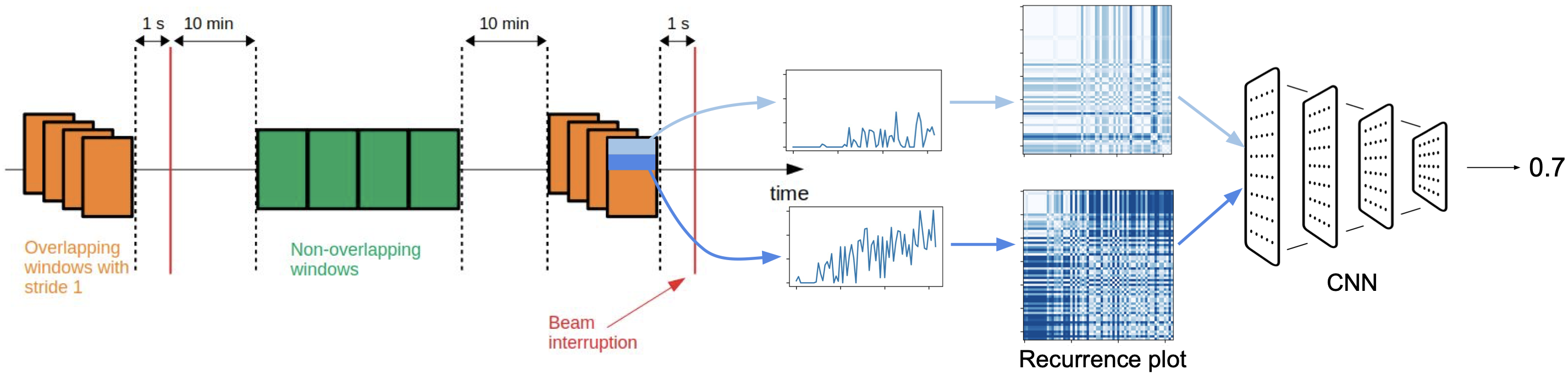}
\caption{The RPCNN model workflow. Two classes of samples are taken as windows from the original 376 channels, with positive samples (orange, indicating interlocks) as sliding windows and negative samples (green, indicating stable operation) as non-overlapping windows. Then each 1-dimensional time series of the windows are transferred into an RP, and in total 376 RPs forms one sample to be fed into the CNN, which gives out a probability score.}
\label{fig:rpcnn}
\end{figure*}

The workflow of the RPCNN model is sketched in Fig.~\ref{fig:rpcnn}. First, we take two classes of time windows either close to or far from the interlocks, where the window length is a tunable model parameter. Explicitly, the \emph{interlock samples} are taken as sliding windows from \SI{1}{\second} to \SI{15}{\second} before interlocks. Second, we transform each 1-dimensional time series of the 376 channels to a 2-dimensional \emph{Recurrence Plot}, which could be interpreted as a pairwise distance measure of the time series and is capable of extracting finer dynamical patterns. Afterwards, we train the plots with \emph{Convolutional Neural Network}, which is proved to be powerful in image-driven pattern recognition~\cite{o2015introduction}. The output is a probability score inside the range $[0,1]$ indicating the likelihood of a sample belonging to the positive (i.e. close to interlock) class. More details such as model architecture are published in~\cite{li2021novel}.

Typical binary classification metrics in a confusion matrix are defined and applied in our setting. A True Positive (TP) means an interlock sample --- a sample less than 15~s before an interlock --- being classified as an interlock. A False Positive (FP) means a stable sample --- a sample at least 10 minutes away from an interlock --- being mistaken as an interlock. A True Negative (TN) means a stable sample being reported as stable. And the remaining False Negative (FN) means that the model fails to identify an interlock.

For the predictions of a binary classification model, the ROC curve shows its true positive rate ($TPR=\frac{TP}{TP+FN}$) against false positive rate ($FPR=\frac{FP}{FP+TN}$) as a function of varying classification threshold~\cite{fawcett2006introduction}. The Area Under Curve (AUC) is a score satisfies $AUC \in [0,1]$ which represents the probability that a random positive (i.e. interlock) sample receives a higher value than a random negative (i.e. stable) sample. It shows the general ability of a model to distinguish between the two classes when taken all classification thresholds into account. The model parameters are chosen according to highest mean AUC value (AUC=0.71) of 25 model instances with random initialization. The classification threshold --- which directly determines the confusion matrix --- is generally chosen at a point on the ROC curve that leverages between high true positive rate and low false positive rate. 

Since the long term goal is to integrate the model into real-time operation, a way to quantify the model performance in real-time setting is needed in addition to standard classification metrics. We develop a custom metric called \emph{average Beam Time Saved per interlock} $\overline{T_s}$, taking real beam time loss and potential recovering methods into consideration. It is assumed that an interlock could be circumvented by reducing 10\% of the beam current, based on experiments and expert consultation. The classification threshold is then chosen according to largest $\overline{T_s}$. However, this metric imposes strict constraint on false positives. FPR is brought down to below 0.2\% at a cost of true positive rate reaching only 4.9\%. As shown later in Fig.~\ref{fig:roc} and Table~\ref{tab:realtime}, the classification result is neither as expected nor stable. There are still a large number of unwanted false positives; and over-fitting might also have occurred due to the non-negligible uncertainty in results as well as unsatisfactory convergence in hyperparameter scan. We examine the input data more in-depth with visualization, expert knowledge and statistical tests to find out the reason, and the customized real-time metric is also revised.

\section{Maximum Mean Discrepancy (MMD) and Two Sample Test}

To discover how long before the interlocks the precursors appear, we take the input data at $t_0$ and $t_1$ seconds respectively before all interlocks, and group them together into two sets. The problem is then formulated as a statistical problem --- to discover whether two random variables are sampled from an identical distribution, without imposing any assumption on the unknown true distribution. To achieve such a goal, we perform \emph{two sample test}~\cite{gretton2012kernel} on the two sets taken at different time before the interlocks, and statistically compare their \emph{Maximum Mean Discrepancy} (MMD). MMD could be interpreted as the distance between two distributions, and a large MMD implies a large difference between the two sets. Figure~\ref{fig:test_outline} generally describes the process.

\begin{figure}
\centering
\includegraphics[width=\linewidth]{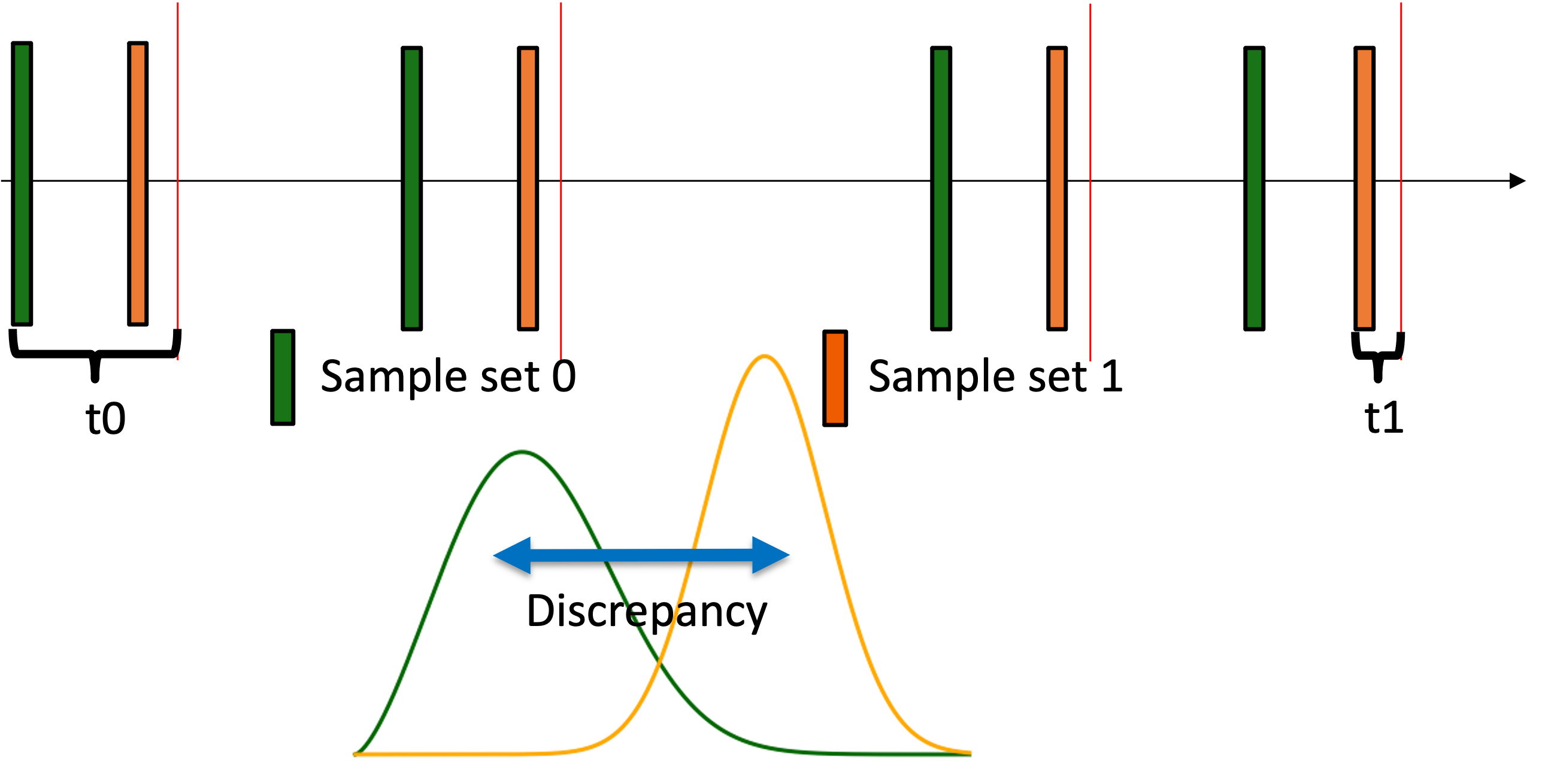}
\caption{An outline of the two sample MMD test that shows the procedure of taking two dataset from $t_0$ and $t_1$ seconds before interlocks.}
\label{fig:test_outline}
\end{figure}

Two sample tests are a group statistical tests that aim to decide whether two sets are drawn from the same distribution. Namely, consider samples $X=\{x_1,\cdots, x_m\}$ that are independently and identically drawn (i.i.d.) from distribution $p$, and samples $Y=\{y_1,\cdots,y_n\}$ i.i.d. from distribution $q$, the null hypothesis of two sample test is $H_0: p = q$, and the alternative hypothesis is thus $H_1: p \neq q$. Examples of widely applied two sample tests include the two sample Kolmogorov-Smirnov (K-S) test~\cite{massey1951kolmogorov}, which compares the empirical cumulative distribution functions between two samples; Anderson-Darling (A-D) test~\cite{scholz1987k}, which is based on K-S test and gives more weights on the tail of distribution; and Cram\'{e}r-von Mises test~\cite{darling1957kolmogorov}, which employs a slightly different measure than A-D test. The tests mentioned above are only suitable for univariate problems; otherwise a multivariate generalisation step is necessary, as \citet{friedman1979multivariate} have shown using Minimum Spanning Trees or, more broadly, Nearest Neighbors approaches.

We choose to apply the two sample test based on MMD~\cite{gretton2012kernel} in our high dimensional data. Its application has been expanding recently, together with the growing trend of data complexity and development of machine learning, for instance the identification of lung cancer as in~\cite{zhao2019identification} and in the training of adversarial neural networks~\cite{dziugaite2015training}.
MMD is a distance metric for the closeness of two probability distributions. Specifically, it is the largest difference between expectations of functions in the unit ball of a reproducing kernel Hilbert space (RKHS)~\cite{sriperumbudur2010hilbert, sriperumbudur2011universality}, which is a Hilbert space of functions that fulfils certain continuity conditions.

Formally, we have observations $X_0$ and $X_1$ from a topological space $\mathcal{X}$, while $p$ and $q$ are their Borel probability measures, respectively. (In our practical case, $\mathcal{X}=\mathbb{R}^d$ where $d=376$ is the input dimension or number of channels.) The maximum mean discrepancy (MMD) is defined as
\begin{equation}\label{eq:mmd1}
    \text{MMD}(\mathcal{F}, p, q) := \sup_{f\in \mathcal{F}}\left(\mathbf{E}_{x_0}[f(x_0)]-\mathbf{E}_{x_1}[f(x_1)]\right)
\end{equation}
where $\mathcal{F}$ is a class of functions $f: \mathcal{X} \to \mathbb{R}$, and $\mathbf{E}_{x_0}[f(x_0)]$ is simplified from $\mathbf{E}_{{x_0}\sim p}[f(x_0)]$, denoting the expectation taking over the distribution $p$. To serve as a test metric capable of distinguishing between distributions, the function class $\mathcal{F}$ needs to be comprehensive enough to ensure a unique value when $p=q$ while being constrained enough to allow for practical finite estimates. \citet{gretton2012kernel} propose $\mathcal{F}$ to be the unit ball in an RKHS~$\mathcal{H}$. Based on the properties of $\mathcal{H}$, the \emph{mean embedding} $\mu_p \in \mathcal{H}$ of a probability distribution $p$ is introduced as
\begin{equation}
    \mathbf{E}_{x_0} f = \langle f, \mu_p \rangle_{\mathcal{H}}
\end{equation}
where $\langle\cdot,\cdot\rangle_{\mathcal{H}}$ denotes the inner product in $\mathcal{H}$. It can then be shown that MMD equals the distance between the two mean embeddings
\begin{equation}\label{eq:mmd2}
    \text{MMD}(\mathcal{F}, p, q) = \|\mu_p - \mu_q\|_{\mathcal{H}}.
\end{equation}

It can be further proved that if $\mathcal{F}$ is the unit ball in a universal RKHS~$\mathcal{H}$ on the compact metric space $\mathcal{X}$ with associated continuous kernel $k(\cdot, \cdot)$, then MMD$(\mathcal{F}, p,q)=0$ iff $p=q$ and the uniqueness is guaranteed~\cite{gretton2012kernel, sriperumbudur2010hilbert}. A detailed derivation of MMD is given in the appendix.

The mean embedding $\mu_p$ is an infinite-dimensional representation of the distribution $p$. Calculating MMD as the RKHS norm of the difference between the mean embeddings contains an essential Fourier transform step, which would require evaluating at infinite frequencies. Heuristic pseudo-distances seem like a plausible solution, such as the difference between values of characteristic functions at a single frequency~\cite{heathcote1972test}. However such methods are highly dependent on prior knowledge of the distributions to compare; and local agreement at some frequency intervals might cover up the globally distinct characteristic functions, thus causing the consistency issue.

Therefore in practice, the actual statistical test we adapt and implement is a variation called the \emph{mean embeddings} (ME) test, based on differences between analytic mean embeddings~\cite{chwialkowski2015fast}. Instead of integrating over the whole frequency domain, the ME test guarantees that it is \emph{almost surely} sufficient to evaluate only at a single random frequency. Theoretically, it is proved to be \emph{almost surely} consistent for all distributions; on the computational side, it has only linear time complexity with respect to the sample size, compared to quadratic time for the original MMD test~\cite{gretton2012kernel}. 

The process of the ME test is described as follows
\begin{enumerate}
    \item Randomly take a set of $J$ test locations $\{\mathbf{v}_j\}_{j=1}^{J} \subset \mathcal{X}$ (i.e. $\mathbb{R}^d$) according to certain optimization schemes~\cite{chwialkowski2015fast};
    \item Take a positive definite kernel $k: \mathcal{X}\times \mathcal{X} \to \mathbb{R}$ (normally use the Gaussian kernel);
    \item For each pair of original samples from the two sets $\{x_{0,i}, x_{1,i}\}_{i=1}^n$, calculate the \emph{differences between kernels}
    \begin{align}
        \vec{z}_i &:= \bigg(k(x_{0,i}, v_1)-k(x_{1,i}, v_1), \dots, \nonumber\\
        &k(x_{0,i}, v_J)-k(x_{1,i}, v_J)\bigg) \in \mathbb{R}^J
    \end{align}
    which is a $J-$dimensional vector.
    \item Calculate the \emph{mean empirical differences}
    \begin{equation}
        \vec{w}_n = \frac{1}{n}\sum_i  \vec{z}_i
    \end{equation}
    and its \emph{covariance matrix}
    \begin{equation}
        \mathbf{\Sigma}_n = \frac{1}{n-1}\sum_i(\vec{z}_i-\vec{w}_n)(\vec{z}_i-\vec{w}_n)^T.
    \end{equation}
    \item The test statistics is computed as
    \begin{equation}
        \widehat{\lambda}_n := n\vec{w}_n\mathbf{\Sigma}_n^{-1}\vec{w}_n
    \end{equation}
    which follows a $\chi^2$ distribution with $J$ degrees of freedom under the null hypothesis $H_0$;
    \item Choose a threshold $t_\alpha$ corresponding to the $1-\alpha$ quantile of the $\chi^2(J)$ distribution, and reject $H_0$ whenever $\widehat{\lambda}_n$ is larger than $t_\alpha$.
\end{enumerate}

To examine when the difference in the input data is emerging, we take samples $X_t=\{x_t^{I_0}, x_t^{I_1}, x_t^{I_2}, \cdots\}$ at $t$ seconds before all the interlocks. Each sample is a $d=376$ dimensional vector, where $t = \SI{0.2}{\second}, \SI{0.4}{\second}, \dots$ denotes the time before interlocks when we take the sample, and $I_0, I_1, \dots$ denotes the interlock number. The sample size for each $X_t$ is the number of available interlocks in the training data. We perform two sample ME test on each pair of samples, namely calculate the ME test statistics between $X_{t=\SI{0.2}{\second}}$ with $X_{t=\SI{0.4}{\second}}, X_{t=\SI{0.6}{\second}}, \dots$, then perform the test between $X_{t=\SI{0.4}{\second}}$ with $X_{t=\SI{0.6}{\second}}, X_{t=\SI{0.8}{\second}}, \dots$ and so forth. Table~\ref{tab:mmd} lists values of the test statistics with $J=5$ and $\alpha=0.01$ for the two sets taken at different $t_0$ and $t_1$, and Figure~\ref{fig:mmd2d} shows the full matrix of test statistics values of all pairs of $t_0$ and $t_1$ from \SI{0.2}{\second} to \SI{10}{\second} on log scale. It is clearly visible that samples taken at \SI{0.2}{\second} before interlocks are notably different, while \SI{0.4}{\second} also shows some slight difference.

\begin{table}
\centering
\caption{Value of $\widehat{\lambda}_n$ ($J=5$, $\alpha=0.01$) for various pair of two sets taken at $t_0$ and $t_1$ seconds before interlocks. The bold numbers denote that $H_0$ is rejected, i.e. the two sets are considered different.}
\begin{tabular}{lrrrrrr}
\toprule
$\mathbf{t_1\big\backslash t_0}$ & \SI{0.2}{\second} & \SI{0.4}{\second} & \SI{0.6}{\second} &  \SI{0.8}{\second} & $\dots$ &\SI{10.0}{\second}\\
\midrule
\SI{0.2}{\second} & 1.2 & \textbf{4935.6} & \textbf{8102.1} & \textbf{7999.4} & $\dots$ & \textbf{7330.0}\\
\SI{0.4}{\second} & - & 5.6 & \textbf{177.0} & \textbf{165.3} & $\dots$ & \textbf{215.8}\\
\SI{0.6}{\second} & - & - & 2.6 & 3.7 & $\dots$ & \textbf{103.9} \\
\SI{0.8}{\second} & - & - & - & 3.6 & $\dots$ & \textbf{93.7}\\
\bottomrule
\end{tabular}
\label{tab:mmd}
\end{table}

\begin{figure}
    \centering
    \includegraphics[width=\linewidth]{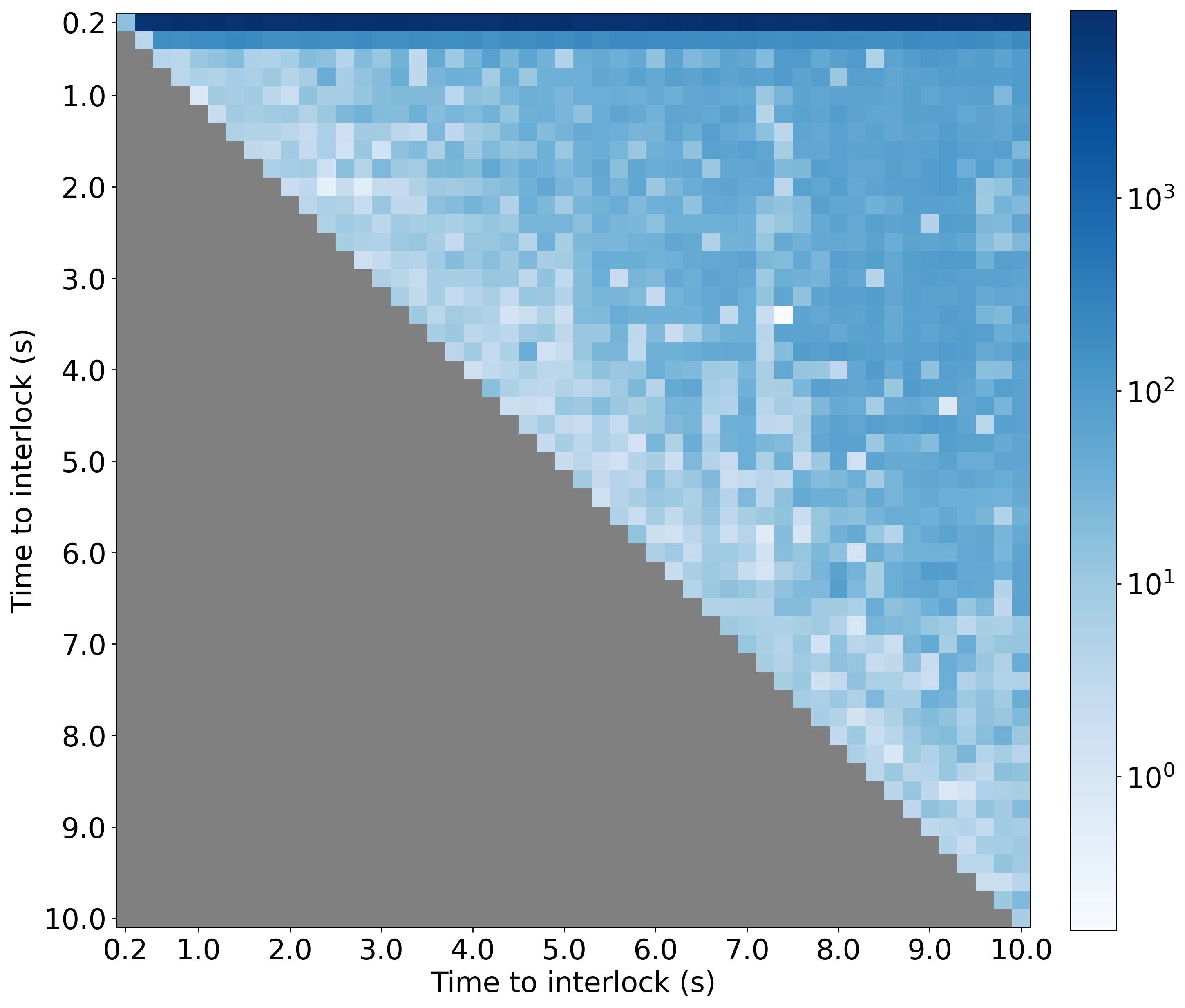}
    \caption{Pairwise ME test statistics comparing sample sets taken at $t_0$ and $t_1$ seconds before interlocks, ranging from \SI{0.2}{\second} to \SI{10}{\second}. The values are transformed on log scale to emphasize the difference. A darker blue color implies more significant discrepancy between the two compared distributions. The lower triangle is masked as gray due to symmetry.}
    \label{fig:mmd2d}
\end{figure}

\begin{figure}
    \centering
    \includegraphics[width=\linewidth]{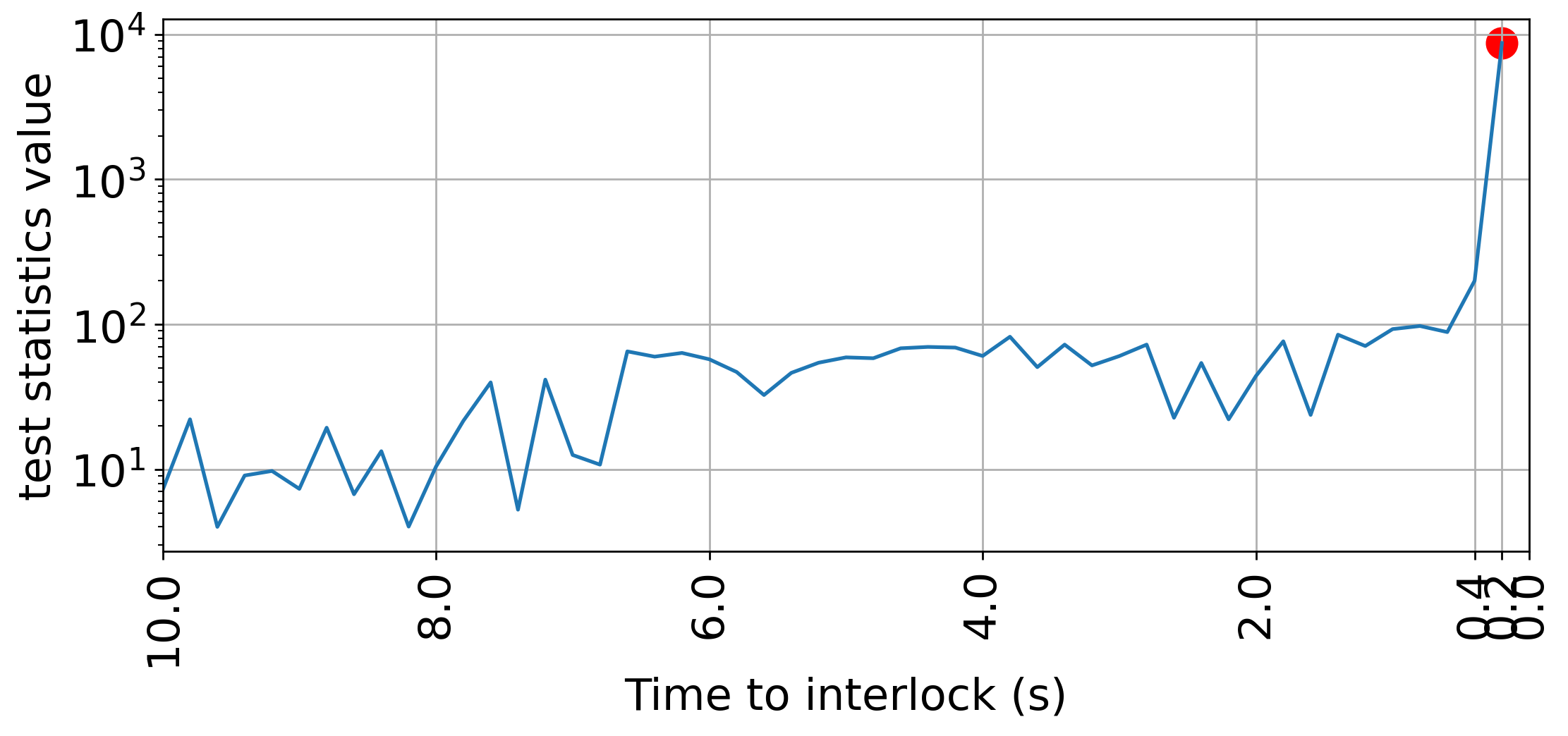}
    \caption{The test statistics value of the set $t_0$ fixed at \SI{10}{\second} compared with data taken at \SI{9.8}{\second} towards \SI{0.2}{\second} before all interlocks in 53 days of 2019. With a log scale for the test statistics, we see that changes are only apparent about \SI{0.2}{\second} and \SI{0.4}{\second} before the interlocks.
}
    \label{fig:mmd10}
\end{figure}

Figure~\ref{fig:mmd10} shows the ME test statistics between samples taken at $t_0=\SI{10}{\second}$ before interlocks and samples taken from $t_1=\SI{9.8}{\second}$ towards \SI{0.2}{\second} before interlocks. With the log scale in y-axis, it is clearly seen that noticeable changes are only present less than \SI{0.4}{\second}, or even only \SI{0.2}{\second} before interlocks. The result implies that interlocks are mostly \emph{abrupt} events and the time scale of changes are much less than \SI{1}{\second}. This explains the weak classification power of RPCNN, since its input data are taken earlier than \SI{1}{\second} before interlocks, which fails to capture the abrupt changes. Since the data remain quite similar most of the time, the model actually sees similar inputs for the two different classes, and its large number of model parameters may only capture spurious patterns specific to the training set. This might as well explain the issue of over-fitting in RPCNN.

\section{The LASSO model}
Based on the above MMD analysis, we modify the two classes of our previous binary classification model. The positive and negative classes are composed of samples taken at $t_1$ and $t_0$ seconds before interlocks, respectively. Instead of the previous time windows, each sample now becomes a $(376,)$ vector taken at one single timestamp. We fit these two classes of samples with a LASSO regression model, which is a standard logistic regression plus a $L_1$ regularization term to suppress the number of input features. 

The inputs are denoted as $\{x_i \in \mathbb{R}^d \}_{i=1}^n $ where $d=376$ and $n$ is the number of samples. The class labels are $\{y_i\}_{i=1}^n \in \{\pm 1\}$, where $y=1$ for positive samples (i.e. $t_1$ seconds before interlock) and $y=-1$ for negative samples (i.e. $t_0$ seconds before interlock). The goal is to fit weight $\mathbf{\omega} \in \mathbb{R}^d$ by minimizing the loss in Eq.~\eqref{eq:lasso}
\begin{align}\label{eq:lasso}
    \min_{\mathbf{\omega}}{L} &= \min_{\mathbf{\omega}}\big[\underbrace{\frac{1}{n}\sum_{i=1}^n\log{[1+\exp{(-y_i \mathbf{\omega}^T \cdot x_i)}]}}_{\text{logistic loss}} \nonumber \\
    \quad &+\underbrace{\lambda \| \mathbf{\omega}\|}_{\text{regularization}}\big].
\end{align}

The output is again a probability output inside the range $[0,1]$. Compared to RPCNN, LASSO is simple, linear and sparse, and the regularization could also lead to better model interpretability.

We compare different choices of $t_0$ and $t_1$ from $\{\SI{0.2}{\second}, \SI{0.4}{\second}, \SI{0.6}{\second}, \SI{1.0}{\second}, \SI{10.0}{\second}\}$ before all interlocks, then train a LASSO logistic regression model to do the classification with 5-fold cross validation, together with a grid-search on the choice of regularization parameter $\lambda$. The summary in Table~\ref{tab:confusion} combines the classification results from all of the 5-folds when they are hold-out as test set. The subscription $c$ for the classification metrics TPR and FPR in the table refers to typical \emph{classification} metric calculated within input samples, in order to distinguish from the customised \emph{real-time} metric introduced later in Section~\ref{sec:realtime}.

\begin{table}
\centering
\caption{Binary classification results shown as TPR\textsuperscript{c} (\%) | FPR\textsuperscript{c} (\%), comparing samples taken at various $t_0$ and $t_1$ seconds before interlocks. The dataset is taken from October and November of 2019 with 1192 interlocks in total. The best TPR\textsuperscript{c} and FPR\textsuperscript{c} results are marked in bold.}
\begin{tabular}{lllll}
\toprule
\multicolumn{1}{c}{$\mathbf{t_1\big\backslash t_0}$}  & \multicolumn{1}{c}{\SI{0.4}{\second}} & \multicolumn{1}{c}{\SI{0.6}{\second}} &  \multicolumn{1}{c}{\SI{1.0}{\second}} &\multicolumn{1}{c}{\SI{10.0}{\second}}\\
\midrule
\SI{0.2}{\second} & 93.2 | 2.3 & 96.8 | 0.5 & 96.8 | \textbf{0.3} & \textbf{97.4} | 0.6\\
\SI{0.4}{\second} & \multicolumn{1}{c}{-} & 52.8 | 22.5 & 53.2 | 21.6 & 57.4 | 23.1\\
\SI{0.6}{\second} & \multicolumn{1}{c}{-} & \multicolumn{1}{c}{-} & 56.0 | 44.2  & 58.4 | 41.6 \\
\SI{1.0}{\second} & \multicolumn{1}{c}{-} & \multicolumn{1}{c}{-} & \multicolumn{1}{c}{-}  & 57.8 | 43.2\\
\bottomrule
\end{tabular}
\label{tab:confusion}
\end{table}

We choose the $t_1=\SI{0.2}{\second}$ versus $t_0=\SI{10}{\second}$ model to proceed for real-time implementation since it has the highest true positive rate. The positive class is now composed of samples taken only at \SI{0.2}{\second} before interlocks, while the negative class is taken at \SI{10}{\second} before interlocks. 
\section{MODEL EVALUATION}
We present two types of evaluation metrics: the Receiver Operating Characteristic (ROC) curve which is standard for binary classification, and our custom evaluation metric \emph{Beam Time Saved} defined in real-time scenario.

\subsection{Classification metric}
Following~\citet{li2021novel}, we keep evaluating the performance of the binary classification model by the ROC curve and its corresponding AUC value. For both RPCNN and Lasso,  Figure~\ref{fig:roc} shows the ROC curve and AUC together with uncertainties calculated from 25 trials each. The behaviour of a perfect model and random guess is also plotted for comparison. With an AUC of $0.99$ and much narrower uncertainty, LASSO demonstrates its superiority in terms of better classification power as well as more stable performance compared to RPCNN.
\begin{figure}[!tbh]
    \centering
    \includegraphics[width=\linewidth]{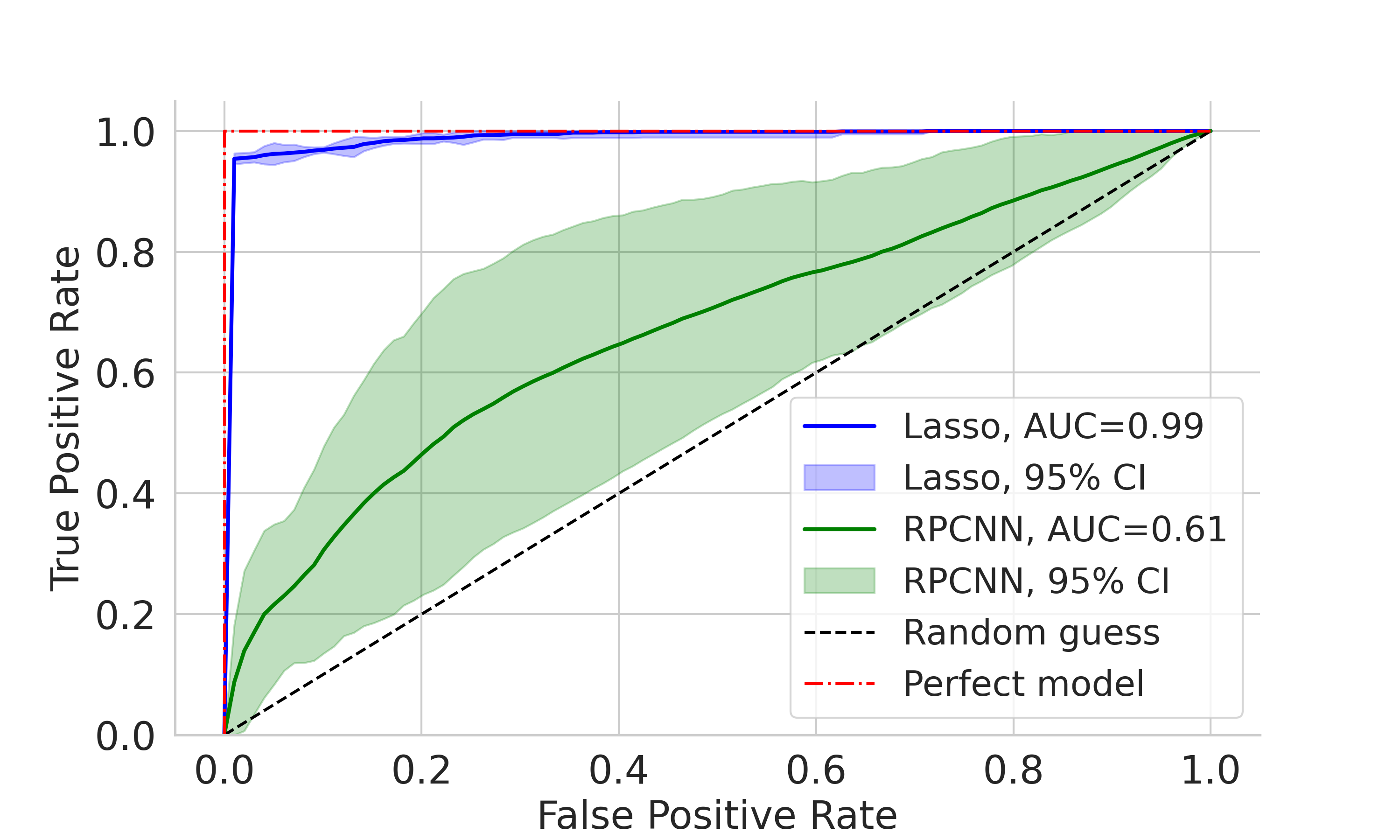}
    \caption{The ROC curves and AUC for LASSO and RPCNN models, a perfect classifier and random guess. The 95\% confidence interval is also depicted as areas.}
    \label{fig:roc}
\end{figure}

A point worth noting is the difference of the inputs of RPCNN from the Lasso model. RPCNN takes window input at least \SI{1}{\second} before interlocks; however, according to the findings from the MMD test that changes only take place inside \SI{0.2}{\second}, the interlock windows and stable windows --- which are considered as two difference classes for RPCNN --- are similarly distributed. This explains the big instabilities and large amount of false positives in prediction. 

\subsection{Real-time metric}\label{sec:realtime}
In the ROC curve, the number of TP\textsuperscript{c} and FP\textsuperscript{c} are calculated only from the testing dataset. In real operation, these metrics need to be updated as new data and interlock events are recorded continuously, and it should be possible to evaluate the model according to an adjustable forecasting horizon. To adapt and improve the previous \emph{beam time saved} metric, we propose our customised definition of TP and FP in real time (denoted as TP\textsuperscript{r} and FP\textsuperscript{r}) as shown in Figure~\ref{fig:goal}.

\begin{figure}
    \centering
    \includegraphics[width=\linewidth]{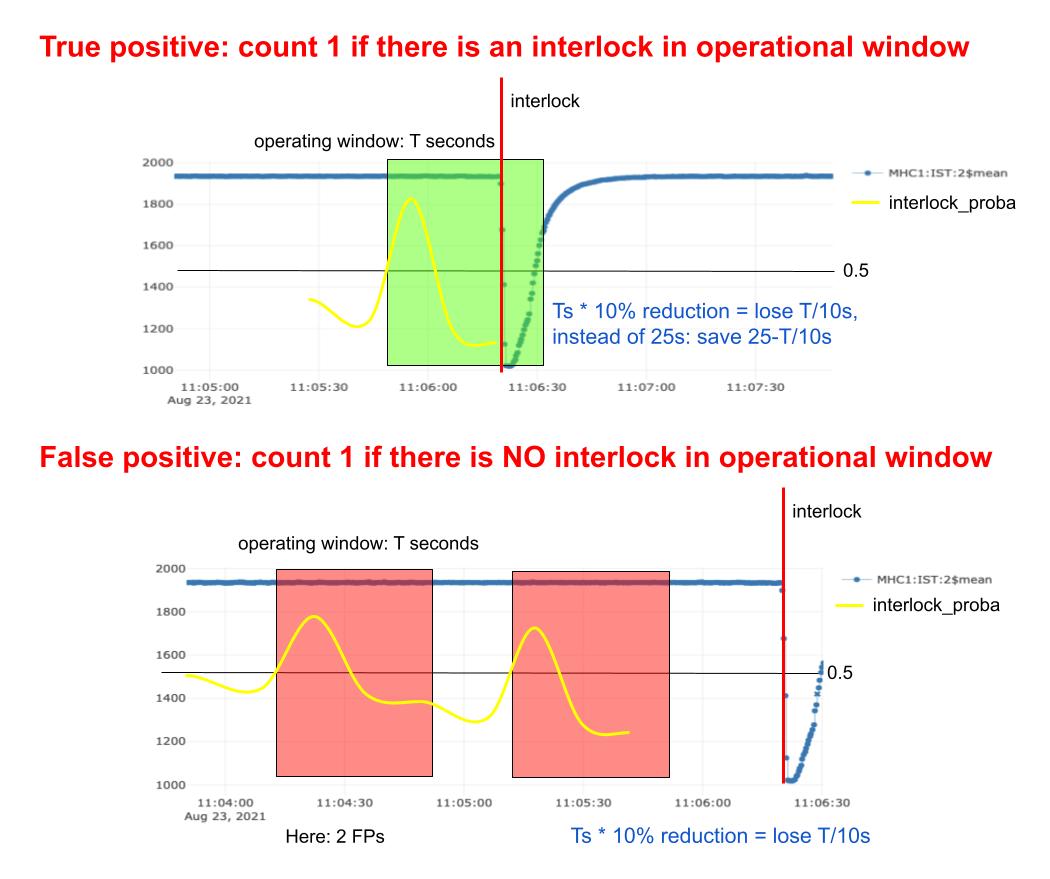}
    \caption{Customised definition of real-time true positives (TP\textsuperscript{r}) and false positives (FP\textsuperscript{r}). An inspection window of \SI{1}{\minute} starts when the model output goes above threshold. A TP\textsuperscript{r} is counted if interlocks fall in this inspection window, otherwise a FP\textsuperscript{r} is recorded.}
    \label{fig:goal}
\end{figure}

Assume the model output is above the classification threshold at time $t$. If there is actually an interlock inside an inspection window of \SI{1}{\minute} after $t$, then this interlock is successfully predicted and it is counted as one TP\textsuperscript{r}; otherwise, if there is no interlock for \SI{1}{\minute}, then one FP\textsuperscript{r} is counted. The number of TP\textsuperscript{r} cannot exceed the total number of interlocks, while the number of FP\textsuperscript{r} is not limited. Thus the ratio TPR\textsuperscript{r}$=N_{\text{TP}_r}/N_{int}$ could be used to evaluate model performance, where $N_{int}$ denotes the total number of interlocks. Figure~\ref{fig:tpfp} shows examples of TP\textsuperscript{r} and FP\textsuperscript{r} in real-time model operation, complying with the procedure described in Figure~\ref{fig:goal}.

\begin{figure*}
    \centering
    \includegraphics[width=\linewidth]{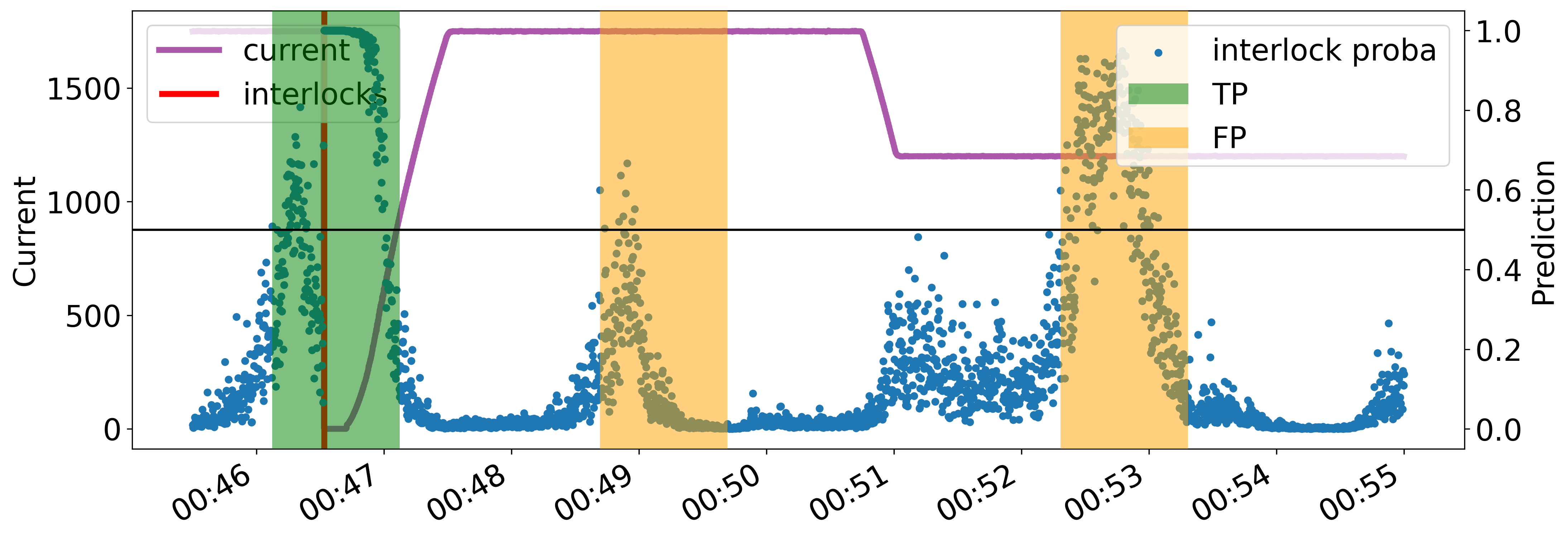}
    \caption{Examples of real-time TP\textsuperscript{r} and FP\textsuperscript{r} from one LASSO training that compares two sets of $t_1=\SI{0.2}{\second}$ and $t_0=\SI{0.4}{\second}$ with 0.5 as the classification threshold, taken from \SI{10}{\minute} period on 2019-10-06. Note that this model is not the best performing one according to the classification and real-time metrics, only to manifest the definition of TP\textsuperscript{r} and FP\textsuperscript{r}. In this example, the interlock is successfully predicted around \SI{30}{\second} in advance, but there are also 2 FP\textsuperscript{r} generated by the model. The \emph{beam time saved} $T_s^r$ during this period is thus $19\cdot 1-6\cdot 2=7$ seconds. }
    \label{fig:tpfp}
\end{figure*}

With the customized definition of real-time TP\textsuperscript{r} and FP\textsuperscript{r}, we construct the real-time metric \emph{Beam Time Saved} $T_s^r$ in Eq.~\eqref{eq:ts} during any given time period
\begin{equation}\label{eq:ts}
    T_s^r \vcentcolon= 19 \cdot \text{TP}^{r} - 6 \cdot \text{FP}^{r}
\end{equation}
where TP\textsuperscript{r} and FP\textsuperscript{r} are the number of real-time TPs and FPs during the concerned time period. The numbers $19$ and $6$ come from the fact that a normal interlock causes a beam time loss of about \SI{25}{\second}; and the assumption that an interlock can be circumvented by reducing the beam current by 10\% is equivalent to a beam time loss of \SI{6}{\second}, as illustrated in Figure~\ref{fig:metric}. One TP\textsuperscript{r} therefore saves \SI{19}{\second} beam time, yet one FP\textsuperscript{r} loses \SI{6}{\second} beam time, as shown in Table~\ref{tab:metric}. 

\begin{figure}
\centering
\includegraphics[width=\linewidth]{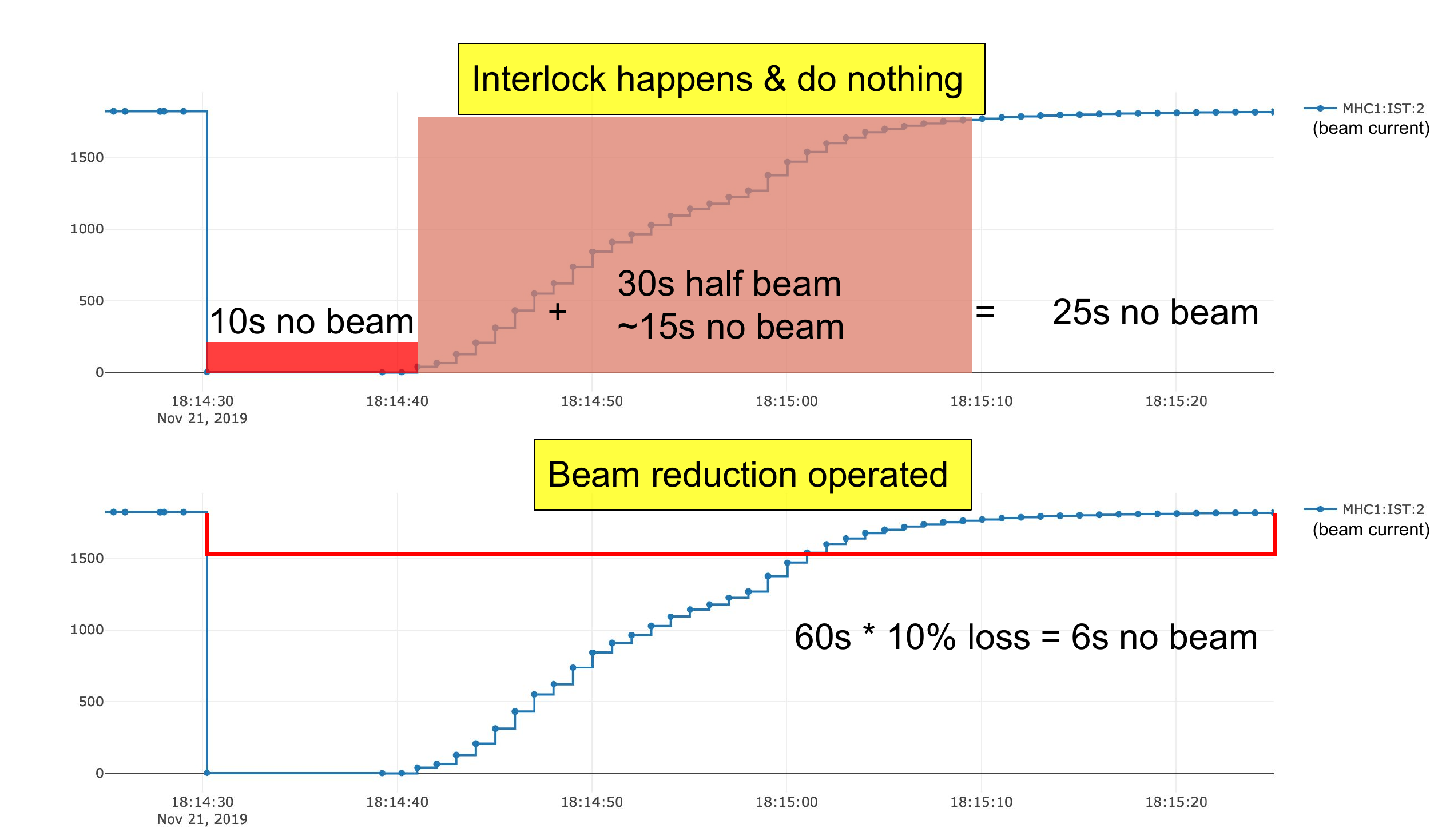}
\caption{Lost beam time in case of interlocks and when beam reduction is operated. Since an interlock lasts around \SI{1}{\minute}, reducing 10\% of beam current is equivalent to \SI{6}{\second} beam time loss.}
\label{fig:metric}
\end{figure}

\begin{table}[!hbt]
   \centering
   \caption{Summary of lost and saved beam time (\SI{}{\second}) in case of TP\textsuperscript{r}, FP\textsuperscript{r} and FN\textsuperscript{r}, before and after the 10\% beam reduction is performed.}
   \begin{tabular}{lrrr}
       \toprule
       \textbf{Conditions} & \textbf{TP\textsuperscript{r}} & \textbf{FP\textsuperscript{r}} &  \textbf{FN\textsuperscript{r}} \\
       \midrule
No recovery operation & -25 & 0 & -25\\
10\% beam current reduced & -6 & -6 & -25 \\
\cdashlinelr{1-4}
Net beam time saved (\SI{}{\second})& 19 & -6 & 0 \\
\bottomrule
   \end{tabular}
   \label{tab:metric}
\end{table}

Table~\ref{tab:realtime} lists the above-mentioned real-time metrics of the two models. LASSO performs better in all metrics, and it has the potential to save around \SI{5}{\minute} of beam time per day, depending on the chosen classification threshold. The numbers of TP\textsuperscript{r} (thus TPR\textsuperscript{r} as well) and FP\textsuperscript{r} drop with increasing threshold, as this reflects a stricter criteria for positives. The trade-off between TP\textsuperscript{r} and FP\textsuperscript{r} in $T_s^r$, as expressed by Eq. \eqref{eq:ts}, portends the existence of an optimal choice of the classification threshold --- the threshold of 0.8 is chosen from current experiments to reach the largest $T_s^r=\SI{5.45}{\minute/\day}$. The best-performing RPCNN model~\cite{li2021novel} has a negative gain of \SI{0.27}{\minute/\day} more beam time loss, due to its rather low number of TP\textsuperscript{r} as well as high number of FP\textsuperscript{r}. Also the different definitions of TP, FP and $T_s$ play a role here.

\begin{table}[!hbt]
   \centering
   \caption{Real-time metrics of both models. The best-performing RPCNN model~\cite{li2021novel}, which is selected based on the previous \emph{average beam time saved} definition $\overline{T_s}$, is listed here as a benchmark. The subscripts $0.5$ to $0.9$ under LASSO denote the different choices of classification thresholds. The best values among each column are marked bold.}
   \begin{tabular}{lrrrrr}
       \toprule
       \textbf{Model} & \textbf{TP\textsuperscript{r}} & \textbf{ FN\textsuperscript{r}} &
       \textbf{TPR\textsuperscript{r}}(\%) &  \textbf{FP\textsuperscript{r}} & $\mathbf{T_s^r}$ (\SI{}{\minute/\day}) \\
       \midrule
RPCNN\textsubscript{best} & 99 & 1093  & 8.3 & 455 & -0.27\\
LASSO\textsubscript{0.5} & \textbf{1017} & \textbf{175} & 
\textbf{85.3} & 1405 & 3.43 \\
LASSO\textsubscript{0.6} & 1004 & 188 &
84.2 & 784 & 4.52 \\
LASSO\textsubscript{0.7} & 1000 & 192 &
83.9 & 429 & 5.17 \\
LASSO\textsubscript{0.8} & 983 & 209 &
82.5 & 222 & \textbf{5.45} \\
LASSO\textsubscript{0.9} & 956 & 236 &
80.2 & \textbf{182} & 5.37 \\
       \bottomrule
   \end{tabular}
   \label{tab:realtime}
\end{table}

Figure~\ref{fig:perf_realtime} compares the performance of the RPCNN\textsubscript{best} and LASSO\textsubscript{0.8} models in real-time scenario. The data are taken from 3 to 4 A.M. on October $6^{th}$ 2019, with 2 interlocks in total. LASSO succeeds in capturing both of the two TPs, while RPCNN misses one TP, and more FPs are clearly present for RPCNN as well.

\begin{figure*}
\centering
\includegraphics[width=\linewidth]{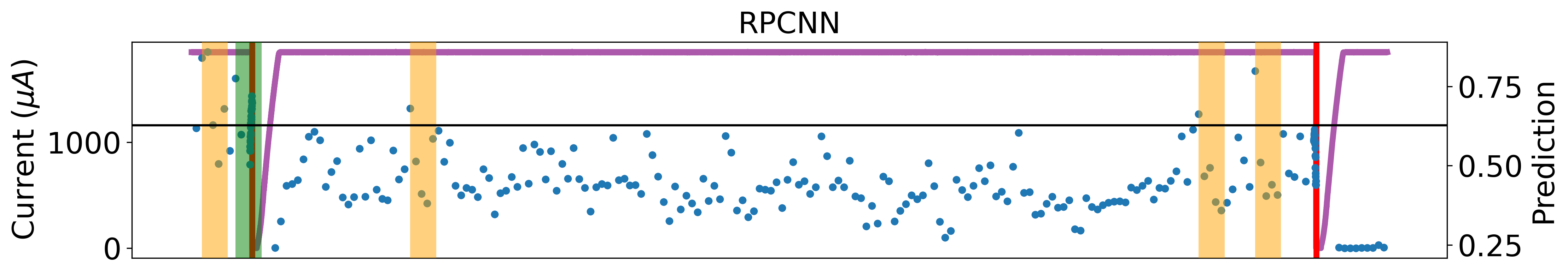}
\includegraphics[width=\linewidth]{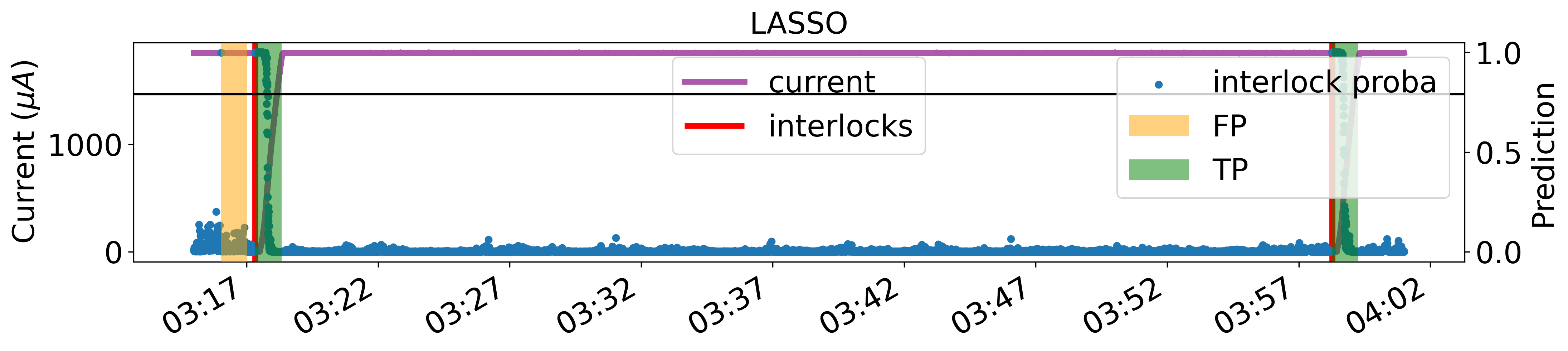}
\caption{Real-time performance of both models in one hour in 2019-10-06.}
\label{fig:perf_realtime}
\end{figure*}

Since the positive samples are only taken at $t_1=\SI{0.2}{\second}$, most of interlocks are expected to be predicted only \SI{0.2}{\second} in advance. However, still some earlier precursors appear in the real-time evaluation of the model, and the actual forecasting horizon $T_h$ (as introduced in Figure~\ref{fig:problem}) extends beyond \SI{0.2}{\second}.  Figure~\ref{fig:horizon} is a cumulative distribution plot of the number of successfully forecasted interlocks (i.e. the 983 TP\textsuperscript{r} out of total 1192 interlocks in Table~\ref{tab:realtime} for the LASSO\textsubscript{0.8} model) with regard to the full range of forecasting horizon $T_h(\SI{}{\second})$, i.e. from \SI{0.2}{\second} to \SI{59.8}{\second}.  967 interlocks are predicted only \SI{0.2}{\second} before, which takes up 98\% out of all 983 true positives.
But there is one earliest prediction occurring at \SI{51.0}{\second} before the interlock, together with several other earlier captures. This demonstrates the potential for the model to make early predictions. In addition, the relation between forecasting horizon and the definition of real time metrics still needs to be further examined.


\begin{figure}
    \centering
\includegraphics[width=\linewidth]{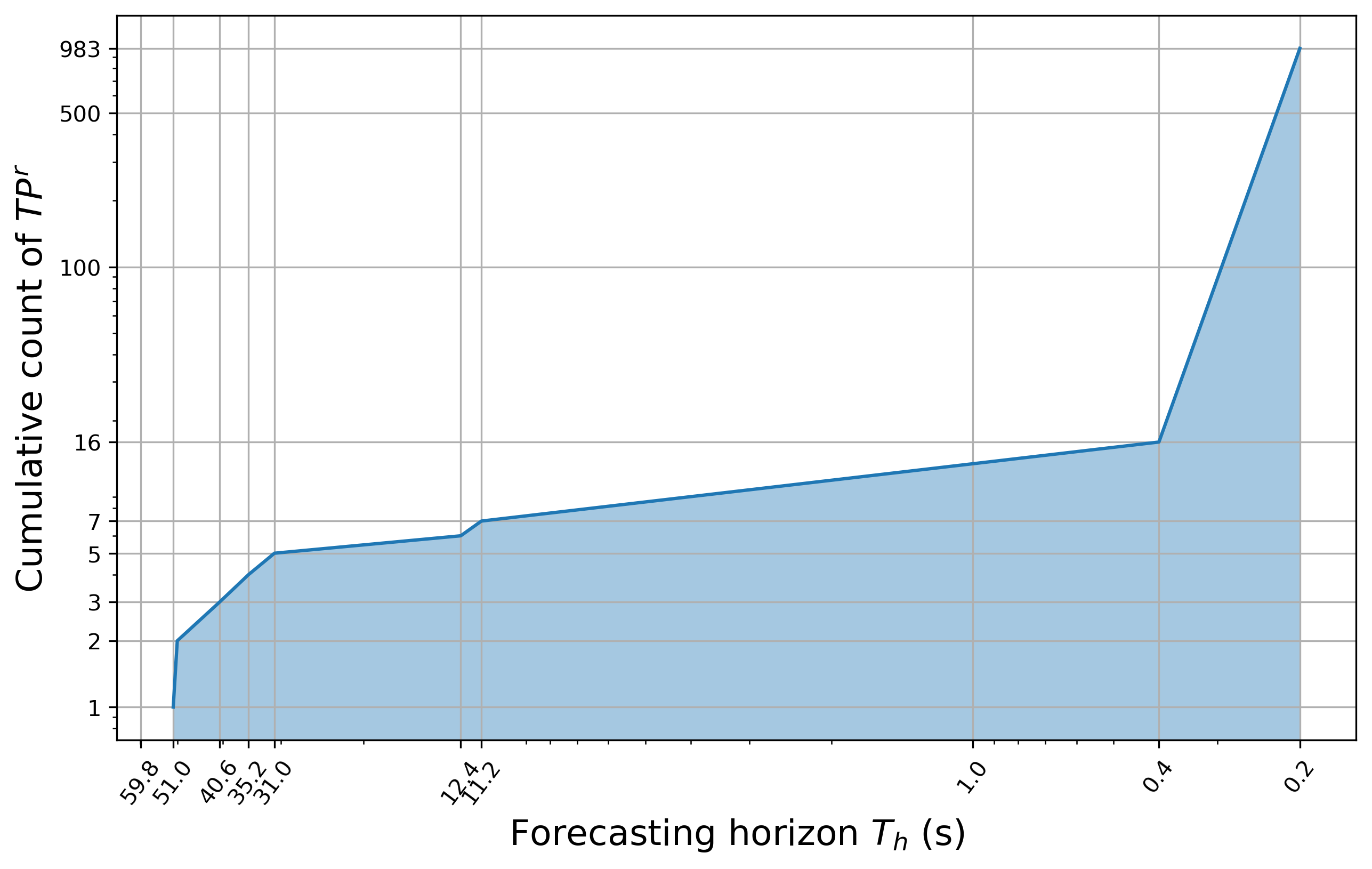}
    \caption{ Cumulative distribution plot of TP\textsuperscript{r} according to forecasting horizon $T_h(s)$. One earliest prediction happens at \SI{51.0}{\second}, and 3 interlocks are predicted longer than \SI{40}{\second}. To emphasize the significant values, both the x and y axes are shown in uneven scale.}
    \label{fig:horizon}
\end{figure}
\section{INSTRUMENTATION}
Although $\text{TPR}^r\approx 85\%$ demonstrates a satisfactory performance  of the LASSO model, the actual ``prediction" of interlock is only available \SI{0.2}{\second} in advance. To perform the recovery operation -- that is, reducing the beam current by 10\% -- in time, we need an instrumentation to realize such change inside \SI{0.2}{\second}. Table~\ref{tab:instrument} lists several possible instrumentation commissioned in various proton facilities at PSI. 

\begin{table}[!hbt]
\centering
\caption{Potential instrumentation for fast adjustment of beam intensity in various proton facilities at PSI}
\begin{tabular}{llr}
\toprule
\textbf{Instrumentation} & \textbf{Facility} & \textbf{Time scale} (\SI{}{\milli
\second}) \\
\midrule
Kicker AVKI & HIPA & 0.005\\
Kicker~\cite{jirousek2003concept} & PROSCAN & 0.05\\
Deflector plates~\cite{carmona18continuous} & Gantry2 & 0.2\\
Beam blocker~\cite{jirousek2003concept} & PROSCAN & 60 \\
Collimator KIP2 & PSI Injector 2 & 66.7/\SI{0.2}{\milli\ampere}\\
\bottomrule
\end{tabular}
\label{tab:instrument}
\end{table}

The AVKI kicker is a fast magnet kicker in the low energy beam line between the Cockroft–Walton pre-accelerator and the \SI{72}{\mega\eV} injector cyclotron~\cite{anicic2005fast}, as shown in the top part of Figure~\ref{fig:hipa}. It is part of the original interlock system --- when an interlock signal is received, it responses in \SI{0.005}{\milli\second} to dump the proton beam.

The Kicker in PROSCAN is an essential part of the \emph{spot scanning} process, where the steered beam is switched on and off between the spots by this fast kicker magnet to apply beam onto the patient as
localized spots~\cite{jirousek2003concept}.

The vertical deflector plates (VD) in Gantry2 aims at accurate (1\% precision) controlling of the beam intensity during fast changes to achieve \emph{continuous line scanning of tumours}. The VDs cut the beam through a set of collimator slits by applying a tunable transversal electric field. The extracted beam current is related to the VD voltage as well as the beam energy~\cite{schatti2014first}.

The beam blockers in PROSCAN are located at the entry of each area, before the degrader, after the above-mentioned kicker, and at the beamline entrance to control the beam intensity. The beam blockers at area entry have a lower response time, yet cannot handle large dose~\cite{jirousek2003concept}. 

Finally, the Collimator KIP2 in HIPA is a movable collimator to maintain the optics and control the injected beam current into the chain of cyclotrons. It is located at the first turn of the \SI{72}{\mega\eV} Injector cyclotron, as shown in Figure~\ref{fig:hipa}~\cite{stetson1992commissioning, tahar2023probing}.

All the listed instrumentations are capable of intercepting the beam within \SI{0.2}{\second}. The time scale listed for kickers, deflector plates and beam blocker are their response time to kick, steer or block the respective beams, while the collimator KIP2~\cite{stetson1992commissioning} moves at a speed of $\sim$\SI{6}{\milli\metre/\second}, which translates to \SI{3}{\milli\ampere/\second} given the default beam current of \SI{2}{\milli\ampere}. Therefore KIP2 needs \SI{66.7}{\milli\second} to change 10\% of the default beam current, i.e. \SI{0.2}{\milli\ampere}.

\section{CONCLUSION and outlook}
We propose an approach based on LASSO logistic regression that tackles the forecasting problem of particle accelerator interruptions as binary classification. Our previous RPCNN model transforms 1-dimensional time series into 2-dimensional images aiming at extracting more refined features, yet its classification power is not as strong as expected. A series of two sample MMD tests show that beam interruptions are more abrupt events than gradual build-ups. Thus the lacking performance of RPCNN might be attributed to inappropriate choices of input data as well as complex model parameters. Based on the MMD test result, a simpler LASSO model is established, and it outperforms RPCNN in both standard classification and custom real-time metrics. The list of possible instrumentation for fast adjustment of beam intensity sheds light on future prospect of integrating the model into real-time operation and finally preventing interlocks upon predictions.

\section{ACKNOWLEDGEMENTS}
We would like to place our special thanks to M\'elissa Zacharias for her work on the RPCNN model and hyper-parameter tunning; to Jaime Coello de Portugal for his work on data collection, and to both of their work on the GUI for the real-time operating system. We would like to thank Davide Reggiani for his support on HIPA and 
the instrumentation. We would like to thank Anastasia Pentina and other colleagues from the Swiss Data Science Center for their insightful collaboration and generous support throughout the research. We also thank Hubert Lutz and Simon Gregor Ebner from PSI for their expert knowledge of the HIPA Archiver and help in data collection. We acknowledge the assistance of Derek Feichtinger and Marc Caubet for their help with the Merlin cluster which enables the computational work of the research.

\appendix*
\section{Detailed derivation about  MMD}\label{appendix}

We start from the RKHS $\mathcal{H}$ and function $f:\mathcal{X} \to \mathbb{R} \in \mathcal{F} \in \mathcal{H}$, where $\mathcal{F}$ is a unit ball. Due to the continuity of the evaluation operator $\delta_x$~
\cite{scholkopf2002learning}
\begin{equation}
    \delta_x: \mathcal{H} \to \mathbb{R}, f \mapsto f(x),
\end{equation}
by the Riesz representation theorem~\cite{reed1980functional}, there is a feature mapping $\phi_x$ i.e.
\begin{equation}\label{eq:kernel}
    \phi_x = k(x, \cdot): \mathcal{X} \to \mathbb{R}
\end{equation}
that satisfies
\begin{equation}
    \langle f, \phi_x\rangle_{\mathcal{H}} = f(x).
\end{equation}
where $\langle\cdot, \cdot\rangle_{\mathcal{H}}$ is the inner product between two functions in $\mathcal{H}$. The kernel $k: \mathcal{X} \times \mathcal{X} \to \mathbb{R}$ in Eq.\eqref{eq:kernel} has one argument fixed at $x$, and the other free argument serves as the input argument for the feature map $\phi_x(\cdot)$. In particular, we have
\begin{equation}
    \langle\phi_{x_0}, \phi_{x_1}\rangle_{\mathcal{H}} = k(x_0,x_1).
\end{equation}

\begin{theorem}[MMD as the distance between mean embeddings in $\mathcal{H}$~\cite{gretton2006kernel}]\label{mmddef}

If mean embeddings $\mu_p$ and $\mu_q$ exist, where
\begin{equation*}
    \mathbf{E}_{x_0\sim p}f = \langle f, \mu_p\rangle_{\mathcal{H}}, \quad
    \mathbf{E}_{x_1\sim q}f = \langle f, \mu_q\rangle_{\mathcal{H}}, \quad \forall f \in \mathcal{F}
\end{equation*}
then
\begin{equation}
    \text{MMD}(\mathcal{F}, p, q) = \|\mu_p - \mu_q\|_{\mathcal{H}}.
\end{equation}
\end{theorem}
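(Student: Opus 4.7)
The plan is to start directly from the definition of MMD in Eq.~\eqref{eq:mmd1} and reduce the supremum to a Hilbert space geometric quantity via the reproducing property of the mean embeddings and Cauchy–Schwarz. First I would substitute the identities $\mathbf{E}_{x_0 \sim p} f = \langle f, \mu_p\rangle_{\mathcal{H}}$ and $\mathbf{E}_{x_1 \sim q} f = \langle f, \mu_q\rangle_{\mathcal{H}}$, which are assumed in the hypothesis, and use linearity of the inner product in its first argument to combine them:
\begin{equation*}
\mathbf{E}_{x_0}[f(x_0)] - \mathbf{E}_{x_1}[f(x_1)] = \langle f, \mu_p - \mu_q\rangle_{\mathcal{H}}.
\end{equation*}
This already recasts $\text{MMD}(\mathcal{F}, p, q)$ as $\sup_{f \in \mathcal{F}} \langle f, \mu_p - \mu_q\rangle_{\mathcal{H}}$, where $\mathcal{F}$ is the unit ball of $\mathcal{H}$.

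Second, I would produce the upper bound by Cauchy–Schwarz: for any $f \in \mathcal{F}$ one has $\langle f, \mu_p - \mu_q\rangle_{\mathcal{H}} \leq \|f\|_{\mathcal{H}} \cdot \|\mu_p - \mu_q\|_{\mathcal{H}} \leq \|\mu_p - \mu_q\|_{\mathcal{H}}$, since $\|f\|_{\mathcal{H}} \leq 1$. Taking the supremum over $f$ gives $\text{MMD}(\mathcal{F}, p, q) \leq \|\mu_p - \mu_q\|_{\mathcal{H}}$.

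Third, I would exhibit a maximizer that attains the bound. Assuming $\mu_p \neq \mu_q$, the candidate is $f^\star \vcentcolon= (\mu_p - \mu_q)/\|\mu_p - \mu_q\|_{\mathcal{H}}$, which lies in $\mathcal{F}$ because $\|f^\star\|_{\mathcal{H}} = 1$. Plugging it into the inner product yields $\langle f^\star, \mu_p - \mu_q\rangle_{\mathcal{H}} = \|\mu_p - \mu_q\|_{\mathcal{H}}$, matching the upper bound. The degenerate case $\mu_p = \mu_q$ makes both sides vanish, so the equality is trivial. Combining the two inequalities gives the claimed identity.

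The proof is essentially routine once the reproducing property is invoked, so there is no serious obstacle; the only subtlety is being careful that $\mu_p - \mu_q$ is a legitimate element of $\mathcal{H}$, which follows from the existence assumption on the mean embeddings stated in the theorem, and that $\mathcal{F}$ being the closed unit ball guarantees the maximizer $f^\star$ is admissible.
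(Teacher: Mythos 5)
Your proposal is correct and follows essentially the same route as the paper's proof: substitute the mean-embedding identity, use linearity to get $\sup_{\|f\|_{\mathcal{H}}\leq 1}\langle f,\mu_p-\mu_q\rangle_{\mathcal{H}}$, and recognize this supremum as the norm, attained at the normalized difference $(\mu_p-\mu_q)/\|\mu_p-\mu_q\|_{\mathcal{H}}$. Your explicit Cauchy--Schwarz upper bound and handling of the degenerate case $\mu_p=\mu_q$ are minor added care the paper's one-line chain of equalities leaves implicit.
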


\begin{proof}
From Eq.~\eqref{eq:mmd1} we have
\begin{align*}
    \text{MMD}(\mathcal{F}, p, q) &= \sup_{f\in \mathcal{F}}\left(\mathbf{E}_{x_0}[f(x_0)]-\mathbf{E}_{x_1}[f(x_1)]\right) \\
    &= \sup_{\|f\|_ \mathcal{H}\leq 1}\left(\langle f, \mu_p\rangle_{\mathcal{H}}-\langle f, \mu_q\rangle_{\mathcal{H}}\right) \\
    &= \sup_{\|f\|_ \mathcal{H}\leq 1}\left(\langle f, \mu_p-\mu_q\rangle_{\mathcal{H}}\right)\\
    &= \left\langle \frac{\mu_p-\mu_q}{\|\mu_p-\mu_q\|_{\mathcal{H}}}, \mu_p-\mu_q\right\rangle_{\mathcal{H}} \\
    &= \|\mu_p-\mu_q\|_{\mathcal{H}}
\end{align*}
\end{proof}

We know that mean embeddings exist when $k(\cdot, \cdot)$ is measurable and $\mathbf{E}_x\sqrt{k(x,x)}<\infty$~\cite{gretton2012kernel}. On top of those, we now set up conditions for MMD to become a metric, i.e. uniquely equals zero if and only if two distributions are the same.
\begin{theorem}[MMD as a metric]
If $\mathcal{H}$ is a \textbf{universal} RKHS defined on a \textbf{compact} metric space $\mathcal{X}$, and the associated kernel $k(\cdot, \cdot)$ is \textbf{continuous}, then
\begin{equation}
    \text{MMD}(\mathcal{F}, p, q) = 0 \iff p=q
\end{equation}
where $\mathcal{F}$ is a unit ball on $\mathcal{H}$ and $p, q$ are two distributions from $\mathcal{X}$.
\end{theorem}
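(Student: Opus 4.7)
The plan is to prove the two directions separately, with the forward implication (MMD $=0 \Rightarrow p=q$) being the nontrivial one. The reverse implication is immediate: if $p=q$, then $\mathbf{E}_{x_0}[f(x_0)]-\mathbf{E}_{x_1}[f(x_1)] = 0$ for every $f$, so the supremum in Eq.~\eqref{eq:mmd1} vanishes. Note also that the hypotheses (continuous kernel on a compact metric space) guarantee $k$ is bounded, so $\mathbf{E}_x \sqrt{k(x,x)} < \infty$ and the mean embeddings $\mu_p, \mu_q$ exist; this lets us invoke Theorem~\ref{mmddef}.

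For the forward direction, I would proceed in three steps. First, apply Theorem~\ref{mmddef} to convert the hypothesis MMD$(\mathcal{F},p,q)=0$ into $\|\mu_p-\mu_q\|_{\mathcal{H}}=0$, i.e.\ $\mu_p=\mu_q$. By the reproducing property this means $\langle f,\mu_p\rangle_{\mathcal{H}} = \langle f,\mu_q\rangle_{\mathcal{H}}$, so $\mathbf{E}_p f = \mathbf{E}_q f$ for every $f\in\mathcal{H}$ (the restriction to the unit ball is removed by linearity).

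Second, I would extend this equality of expectations from $\mathcal{H}$ to all of $C(\mathcal{X})$ using universality. By definition, a universal RKHS on a compact metric space is dense in $C(\mathcal{X})$ with respect to $\|\cdot\|_\infty$. So for any $g\in C(\mathcal{X})$ and any $\varepsilon>0$, choose $f\in\mathcal{H}$ with $\|f-g\|_\infty<\varepsilon$; then by the triangle inequality,
\begin{equation*}
    |\mathbf{E}_p g - \mathbf{E}_q g| \leq |\mathbf{E}_p(g-f)| + |\mathbf{E}_p f - \mathbf{E}_q f| + |\mathbf{E}_q(f-g)| \leq 2\varepsilon,
\end{equation*}
and letting $\varepsilon\to 0$ gives $\mathbf{E}_p g = \mathbf{E}_q g$ for all $g\in C(\mathcal{X})$.

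Third, conclude $p=q$ from equality of expectations against continuous functions. Since $\mathcal{X}$ is a compact metric space, the Riesz--Markov representation theorem (equivalently, the fact that $C(\mathcal{X})$ separates Borel probability measures on a compact metric space) implies that two Borel probability measures agreeing on $C(\mathcal{X})$ must coincide. The main obstacle I expect is making sure the hypotheses are used in the right places: compactness is needed both for $k$ to be bounded (so mean embeddings exist and expectations are finite) and for $C(\mathcal{X})$ to be rich enough that universality gives a useful approximation statement, while continuity of $k$ is what makes elements of $\mathcal{H}$ continuous functions so that universality in $C(\mathcal{X})$ is even meaningful. Once these conditions are correctly deployed, the density argument and Riesz--Markov do the remaining work cleanly.
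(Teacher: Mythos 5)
Your proposal is correct and follows essentially the same route as the paper: both directions are handled identically, with the forward implication obtained by passing from $\text{MMD}=0$ to $\mu_p=\mu_q$, using universality (density of $\mathcal{H}$ in $C(\mathcal{X})$ under $\|\cdot\|_\infty$) together with the triangle inequality to get $\mathbf{E}_p g=\mathbf{E}_q g$ for all $g\in C(\mathcal{X})$, and then concluding $p=q$ because continuous functions separate Borel probability measures on a compact metric space. Your added remark that compactness and continuity make $k$ bounded, so the mean embeddings exist, is a small but welcome piece of bookkeeping the paper leaves implicit.
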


\begin{proof}
It is clear from the expression of MMD (Eq.~\eqref{eq:mmd1} and Eq.~\eqref{eq:mmd2}) that if $p=q$, MMD$(F, p, q)$ is zero. Therefore we only need to prove the rightward arrow.

Since $\mathcal{H}$ is universal, the kernel $k$ is required to be continuous, and $\mathcal{H}$ is also dense with regard to the $L_{\infty}$ norm in the bounded continuous function space $C(\mathcal{X})$~\cite{micchelli2006universal}. In other words, any function in $C(\mathcal{X})$ can be arbitrarily well-approximated by functions in $\mathcal{H}$. Therefore, for any given $\epsilon > 0$ and any function $g \in C(\mathcal{X})$, there exists a function $h \in \mathcal{H}$ such that
\begin{equation}\label{eq:dense}
    \|g-h\|_{\infty} < \epsilon.
\end{equation}
Then we have
\begin{align*}
    |\mathbf{E}_{x_0} g(x_0) - \mathbf{E}_{x_1} g(x_1)|&\leq |\mathbf{E}_{x_0} g(x_0) - \mathbf{E}_{x_0} h(x_0)|\\
    &+ |\mathbf{E}_{x_0} h(x_0) - \mathbf{E}_{x_0} h(x_1)|\\
    &+ |\mathbf{E}_{x_1} h(x_1) - \mathbf{E}_{x_1} g(x_1)| \\
    &\leq \mathbf{E}_{x_0} |g(x_0)-h(x_0)|\\
    &+|\langle h, \mu_p-\mu_q\rangle|_{\mathcal{H}}\\
    &+\mathbf{E}_{x_1}|h(x_1)-g(x_1)|
\end{align*}
and from Eq.~\eqref{eq:mmd2} we know that $\text{MMD}=0$ implies $\mu_p - \mu_q = 0$; from Eq.~\eqref{eq:dense} we have $\mathbf{E}_{x}|g(x)-h(x)| < \epsilon$. Therefore
\begin{equation}
    |\mathbf{E}_{x_0} g(x_0) - \mathbf{E}_{x_1} g(x_1)| \leq \epsilon + 0 + \epsilon = 2\epsilon
\end{equation}
for all $\epsilon > 0$ and $g\in C(\mathcal{X})$. This indicates that the two distributions or Borel probability measures $p$ and $q$ are equal~\cite{gretton2012kernel, dudley2018real}.
\end{proof}

\bibliography{main}
\end{document}